\newtheorem{lemma}{Lemma}
\newtheorem{theorem}{Theorem}
\begin{document}

\title{A Tight Lower Bound for Clock Synchronization \\
       in Odd-Ary M-Toroids
      }

\author{Reginald Frank\footnote{Supported in part by the Distributed
    Research Experiences for Undergraduates (DREU) program, a joint
    project of the CRA Committee on the Status of Women in Computing
    Research (CRA-W) and the Coalition to Diversify Computing (CDC),
    which is funded in part by the NSF Broadening Participation in
    Computing program (NSF CNS-0540631).}\\
       Texas A\&M University, USA \\
       reginaldfrank77@tamu.edu \\
       \and
       Jennifer L. Welch\footnote{Supported in part by NSF grant 1526725.} \\
       Texas A\&M University, USA \\
       welch@cse.tamu.edu \\
}

\maketitle

\begin{abstract}

Synchronizing clocks in a distributed system in which processes
communicate through messages with uncertain delays is subject to
inherent errors. Prior work has shown upper and lower bounds on the
best synchronization achievable in a variety of network topologies and
assumptions about the uncertainty on the 
message delays.
However, until now
there has not been a tight closed-form expression for the optimal
synchronization in $k$-ary $m$-cubes with wraparound, where $k$ is
odd. In this paper, we prove a lower bound of
$\frac{1}{4}um\left(k-\frac{1}{k}\right)$, where $k$ is the (odd)
number of processes in the each of the $m$ dimensions, and $u$ is the
uncertainty in delay on every link.  Our lower bound matches the
previously known upper bound.
\end{abstract}

\section{Introduction}

Synchronizing clocks in a distributed system in which processes
communicate through messages with uncertain delays is subject to
inherent errors.  A body of work has sought bounds on how closely
the clocks can be synchronized when there is no drift in the hardware
clocks and there are no failures.  
Prior work has shown upper and lower bounds on the
best synchronization achievable in a variety of network topologies and
assumptions about the uncertainty on the message delays.

Lundelius and Lynch \cite{LundeliusL1984} showed that, in an 
$n$-process
clique with the same uncertainty $u$ on every link, the best
synchronization possible is $u\left(1 - \frac{1}{n}\right)$.
Subsequently, Halpern et al. \cite{HalpernMM1985} considered arbitrary
topologies in which each link may have a different
uncertainty, and they showed that the optimal clock synchronization is
the solution of an optimization problem; however, no general
closed-form expression was given.
Biaz and Welch \cite{BiazW2001} gave a collection of closed-form upper
and lower bounds on the optimal clock synchronization for 
$k$-ary $m$-cubes ($m$-dimensional hypercubes with $k$
processes in every dimension), both with and without 
wraparound, in which
every link has the same uncertainty, $u$.  When there is no
wraparound, the tight bound is $\frac{1}{2}um\left(k-1\right)$.  When
there is wraparound and $k$ is even, the tight bound is
$\frac{1}{4}umk$.  However, when there is wraparound and $k$ is odd,
there is a gap between the upper bound of $\frac{1}{4}um\left(k -
\frac{1}{k}\right)$ and the lower bound of $\frac{1}{4}um\left(k -
1\right)$.

In this paper, we consider $k$-ary $m$-cubes with wraparound
(``$m$-toroids'') and odd $k$.  We show a lower bound of
$\frac{1}{4}um\left(k-\frac{1}{k}\right)$, which matches the previously
known upper bound.  We use the same shifting technique from previous
lower bounds for clock synchronization (e.g.,
\cite{LundeliusL1984,HalpernMM1985,BiazW2001}).  The key insight in
our improved lower bound is to exploit the fact that the graph is a
collection of rings in each dimension and to use multiple shifted 
executions instead of just one.

\section{Preliminaries}

We first present our model and problem statement (following
\cite{LundeliusL1984,AttiyaW2004,BiazW2001}).  We consider a graph of
$k^{m}$ processes, where $k \ge 3$ is odd and $m \ge 1$, 
in which each process id is a tuple
$\left\langle p_0,p_1,...,p_{m-1}\right\rangle$ 
where each $p_i \in \left\lbrace 0,1,...,k-1\right\rbrace$. 
There are links in both directions between any two processes $\vec{p}$ 
and $\vec{q}$ if and only if
their ids differ in exactly one component, say the $i$-th, such that
$p_i = q_i + 1$ (addition on process indices is modulo $k$
throughout).  Each process $\vec{p}$ has a {\em hardware clock}
modeled as a function $HC_{\vec{p}}$ from reals (real time) to reals
(clock time).  We assume there is no drift, so $HC_{\vec{p}}(t) = t +
c_{\vec{p}}$ for some constant $c_{\vec{p}}$.  Each {\em process} is
modeled as a state machine whose transition function takes as input
the current state, current value of the hardware clock, and current
{\em event} (receipt of a message or some internal occurrence), and
produces a new state and a message to send over each incident link.

A {\em history} of process $\vec{p}$ is a sequence of alternating
states and pairs of the form (event, hardware clock value), beginning
with $\vec{p}$ 's initial state.  Each state must follow correctly
from the previous one according to $\vec{p}$ 's transition function
and the hardware clock values must increase.
A {\em timed history} of $\vec{p}$ is a history together with an
assignment of a real time $t$ to each pair $(e,T)$ in the history such
that $HC_{\vec{p}}(t) = T$.  An {\em execution} is a set of $k^{m}$ timed
histories, one per process, with a bijection 
for each link between the set of messages sent over the link and the set
of messages received over the link.
The {\em delay} of a message is the difference between the
real time when it is received and the real time when it is sent.
An execution is {\em admissible} if every message has delay in 
$\left[0,u\right]$ where $u$ is a fixed value called the 
{\em uniform uncertainty}.

We assume each process $\vec{p}$ has a local variable {\em adj}$_{\vec{p}}$ 
as part of its state and we define its {\em adjusted clock} $AC_{\vec{p}}(t)$ 
to be equal to $HC_{\vec{p}}(t) + adj_{\vec{p}}(t)$.  
An execution has {\em terminated} once all processes have stopped
changing their {\em adj} variables.  We say the algorithm {\em achieves 
$\epsilon$-synchronized clocks} if every admissible execution eventually 
terminates with
$|AC_{\vec{p}}(t) - AC_{\vec{q}}(t)| \le \epsilon$ for all 
processes $\vec{p}$ and $\vec{q}$ and all times $t$ after termination.

``Shifting'' an execution changes the real times at which events occur
\cite{LundeliusL1984}.  
Let {\bf x} be an $m$-dimensional matrix of real numbers
with $k$ elements in each dimension, which we call a {\em shift matrix}; 
elements of {\bf x} are indexed by process ids.
Define {\em shift}$(\alpha,${\bf x}$)$ be the result of adding $x_{\vec{p}}$
to the real time associated with each event in $\vec{p}$ 's timed history
in $\alpha$.  Shifting changes the hardware clocks and message delays
as follows \cite{LundeliusL1984,AttiyaW2004}:

\begin{lemma} 
\label{lem:shifting}
Let $\alpha$ be an execution with hardware clocks $HC_{\vec{p}}$
and let {\bf x} be a shift matrix.
Then {\em shift}$(\alpha,${\bf x}$)$ is a (not necessarily admissible)
execution in which
\begin{itemize}
\item[(a)] the hardware clock of each $\vec{p}$, denoted $HC_{\vec{p}}'(t)$, equals
$HC_{\vec{p}}(t) - x_{\vec{p}}$ and
\item[(b)] every message from $\vec{p}$ to $\vec{q}$ has delay $\delta - x_{\vec{p}} + x_{\vec{q}}$,
where $\delta$ is the message's delay in $\alpha$.
\end{itemize}
\end{lemma}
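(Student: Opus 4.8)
The plan is to verify the two claimed properties directly from the definition of \emph{shift}, after first checking that \emph{shift}$(\alpha,\mathbf{x})$ is an execution in the formal sense defined above. The crucial observation is that shifting only relabels the real times attached to events: it leaves every \emph{untimed} history — the sequence of states, the events, and the messages produced by each transition — completely untouched. Hence each process's state sequence still follows correctly from its transition function, and the per-link bijection between the messages sent over a link and the messages received over it, which exists in $\alpha$, can be reused verbatim in \emph{shift}$(\alpha,\mathbf{x})$.

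To see that each process still has a well-defined drift-free hardware clock after shifting, fix $\vec{p}$ and suppose an event recording hardware-clock value $T$ occurs at real time $t$ in $\alpha$, so that $HC_{\vec{p}}(t)=t+c_{\vec{p}}=T$. In \emph{shift}$(\alpha,\mathbf{x})$ the same event occurs at real time $t'=t+x_{\vec{p}}$ while still recording hardware-clock value $T$. We then need a function $HC'_{\vec{p}}$ with $HC'_{\vec{p}}(t')=T=t'+c_{\vec{p}}-x_{\vec{p}}$, so the natural (and forced) choice is $HC'_{\vec{p}}(t')=t'+(c_{\vec{p}}-x_{\vec{p}})$, a legitimate drift-free clock with constant $c_{\vec{p}}-x_{\vec{p}}$. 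Because adding the common constant $x_{\vec{p}}$ to all of $\vec{p}$'s event times preserves their order, the recorded hardware-clock values still increase along $\vec{p}$'s history, so \emph{shift}$(\alpha,\mathbf{x})$ indeed satisfies the definition of an execution. Moreover $HC'_{\vec{p}}(t)=t+c_{\vec{p}}-x_{\vec{p}}=HC_{\vec{p}}(t)-x_{\vec{p}}$, which is exactly part~(a).

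For part~(b), consider a message from $\vec{p}$ to $\vec{q}$ that is sent at real time $t_s$ and received at real time $t_r$ in $\alpha$, so its delay there is $\delta=t_r-t_s$. The send is an event in $\vec{p}$'s timed history and the receipt is an event in $\vec{q}$'s timed history, so under the shift they move to real times $t_s+x_{\vec{p}}$ and $t_r+x_{\vec{q}}$ respectively; the new delay is therefore $(t_r+x_{\vec{q}})-(t_s+x_{\vec{p}})=\delta-x_{\vec{p}}+x_{\vec{q}}$, which is part~(b). Since this quantity can be negative or exceed $u$, the shifted execution need not be admissible, consistent with the statement.

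There is no genuinely hard step: the lemma is essentially bookkeeping. The only place that deserves care is the claim in the first paragraph that \emph{shift}$(\alpha,\mathbf{x})$ meets the formal definition of an execution — specifically, producing the shifted hardware clock $HC'_{\vec{p}}$ explicitly and checking that monotonicity of the recorded hardware-clock values along each history survives the reparametrization, and that the old send/receive bijection remains a valid bijection for the shifted timed histories. Once that is settled, parts~(a) and~(b) fall out immediately from the definitions, the delay being simply a difference of two event times that the shift moves by $x_{\vec{q}}$ and $x_{\vec{p}}$.
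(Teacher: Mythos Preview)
Your argument is correct: the lemma is indeed bookkeeping, and you have carried out the verification carefully, including the point that the untimed histories are unchanged, that the shifted hardware clock is still drift-free with offset $c_{\vec{p}}-x_{\vec{p}}$, and that the send/receive bijections carry over unchanged.

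Note, however, that the paper does not actually prove this lemma at all: it is stated with citations to \cite{LundeliusL1984,AttiyaW2004} and used as a black box. Your proof is exactly the standard direct verification one finds in those references, so there is no meaningful methodological difference to discuss; you have simply supplied what the paper chose to import.
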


\section{Lower Bound}

\begin{theorem}
	\label{theorem:Lower Bound}
For any algorithm that achieves $\epsilon$-synchronized clocks in a
$k$-ary $m$-toroid with uniform uncertainty $u$, where $k$ is odd, it
must be that $\epsilon \ge \frac{1}{4}um\left(k-\frac{1}{k}\right)$.
\end{theorem}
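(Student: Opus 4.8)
The plan is to use the standard shifting/indistinguishability argument, but applied to a carefully chosen family of shifts on each ring (one-dimensional cycle) of the toroid, combined additively across the $m$ dimensions. First I would reduce to the case $m=1$: a $k$-ary $m$-toroid decomposes, dimension by dimension, into rings of length $k$, and because the uncertainty on every link is the same $u$, a shift matrix that varies only along one coordinate axis affects only the links in that dimension. So it suffices to produce, for a single ring of $k$ processes, a shift vector that is ``legal'' (keeps all delays in $[0,u]$, i.e. each coordinate difference $x_{\vec p}-x_{\vec q}$ across an edge lies in $[-u,u]$ when starting from the all-$u/2$ execution, or more simply a shift that turns one admissible execution into another admissible one) and that forces two processes' adjusted clocks to be pushed apart by $\tfrac14 u\left(k-\tfrac1k\right)$. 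Summing the per-dimension shifts (they act on disjoint edge sets) and invoking Lemma~\ref{lem:shifting} multiplicatively then yields the bound $\tfrac14 um\left(k-\tfrac1k\right)$ for general $m$.

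For the single ring, I would start from a "canonical" admissible execution $\alpha_0$ in which every message has delay exactly $u/2$. Take any two antipodal-ish processes $\vec p$ and $\vec q$ on the ring; since $k$ is odd, the two arcs between them have lengths $(k-1)/2$ and $(k+1)/2$. The heart of the argument is to build a shift vector $\mathbf x$ so that $\mathrm{shift}(\alpha_0,\mathbf x)$ is still admissible, yet by part~(a) of Lemma~\ref{lem:shifting} the hardware clock of $\vec q$ drops by $x_{\vec q}$ relative to that of $\vec p$ dropping by $x_{\vec p}$, i.e. the true time-gap between them changes by $x_{\vec p}-x_{\vec q}$. Since the algorithm cannot distinguish $\alpha_0$ from its shift, the adjusted clocks in both executions agree in value (as functions of the events, hence of hardware-clock readings), so $AC_{\vec p}-AC_{\vec q}$ must be off by $x_{\vec p}-x_{\vec q}$ between the two executions; applying the $\epsilon$-closeness requirement in both gives $2\epsilon \ge |x_{\vec p}-x_{\vec q}|$. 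To make $|x_{\vec p}-x_{\vec q}|$ as large as possible subject to every edge-difference lying in $[-u/2,u/2]$ (so delays stay in $[0,u]$), one would ramp $x$ up by $u/2$ per step along the short arc and down by $u/2$ per step along the long arc — but that over-constrains a cycle, so instead the slopes must be balanced. This is exactly where using \emph{multiple} shifted executions (as the introduction hints) comes in: rather than one shift achieving $x_{\vec p}-x_{\vec q}$ directly, I would use a chain of shifts $\alpha_0 \to \alpha_1 \to \cdots$, each legal, so that the cumulative displacement telescopes; on a ring of odd length the optimal balanced assignment gives an average slope of magnitude $\tfrac{u}{2}\cdot\tfrac{k-1/k}{\,?\,}$ — concretely, distributing the $\pm u/2$ slope constraints around the odd cycle and optimizing the potential difference between the best-separated pair yields $\tfrac14 u\left(k - \tfrac1k\right)$, the extra $\tfrac1k$ correction (versus the naive $\tfrac14 u(k-1)$) being precisely what odd length forces.

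The main obstacle I expect is the ring optimization with the cycle constraint: on a path, a slope-$u/2$ ramp trivially gives spread $\tfrac{u}{2}\cdot(\text{length})$, but on a cycle the shift values must return to themselves, so one cannot simply ramp. Handling this cleanly is where the "multiple shifted executions" idea is essential — I would phase in the shift in rounds, each round legal on its own, letting the algorithm terminate (or at least letting enough messages arrive) between rounds so that indistinguishability is preserved at each stage, and accumulate a net displacement larger than any single legal shift allows. Verifying that admissibility is maintained throughout each round, and that the resulting family of executions is genuinely pairwise indistinguishable to every process (so that the $\epsilon$ bound applies at each step and the errors add), is the delicate bookkeeping; the arithmetic $\tfrac{(k-1)/2\cdot(k+1)/2}{k} = \tfrac{k^2-1}{4k} = \tfrac14(k-\tfrac1k)$ then falls out, and multiplying by $m$ and by $u$ finishes the theorem.
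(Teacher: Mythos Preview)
Your proposal has the right target arithmetic and correctly identifies the per-dimension decomposition, but it misreads what ``multiple shifted executions'' means, and the mechanism you describe will not close the gap.

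First, the chain idea---``phase in the shift in rounds, letting the algorithm terminate between rounds so that indistinguishability is preserved at each stage, and accumulate a net displacement larger than any single legal shift allows''---is not sound in this model. A shift is applied to a \emph{complete} execution, not injected during the run; there is no sense in which the algorithm ``terminates between rounds'' and then continues under a further shift. Composing shifts $\mathbf{x}$ and then $\mathbf{x}'$ just gives the single shift $\mathbf{x}+\mathbf{x}'$, whose admissibility is governed by the same per-edge constraint as any single shift. So chaining cannot buy a larger displacement than one legal shift.

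Second, if you start from the all-$u/2$ execution and use one legal shift (or the symmetric pair $\pm\mathbf{x}$), the best separation you can force between two ring nodes is $r\cdot u/2$ with $r=(k-1)/2$, which yields only $\epsilon \ge \tfrac14 u(k-1)$ per dimension---exactly the old lower bound you are trying to improve. The extra $\tfrac{1}{k}$ does not appear.

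What the paper actually does is different in kind. It builds a single base execution $\alpha$ with extremal delays ($0$ or $u$, not $u/2$) and then constructs $k$ \emph{independent} shifted executions $\alpha^{0},\dots,\alpha^{k-1}$, all admissible. From each $\alpha^{i}$ it extracts a single inequality of the form
\[
AC_{\langle i,\dots,i\rangle}-AC_{\langle i-r,\dots,i-r\rangle}\ \le\ \epsilon - m\,|r-i|\,u,
\]
where the $AC$'s are the adjusted clocks in the common base execution $\alpha$ (via Lemma~\ref{lem:shifting}(a)). The pairs are chosen so that when you \emph{add all $k$ inequalities}, every adjusted-clock term cancels: each diagonal process occurs exactly once with a plus sign and once with a minus sign. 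The left side is therefore $0$, and the right side is $k\epsilon - um\sum_{i}|r-i| = k\epsilon - um\cdot r(r+1)$, giving
\[
\epsilon\ \ge\ \frac{um\,r(r+1)}{k}\ =\ \frac{um(k-1)(k+1)}{4k}\ =\ \tfrac14\,um\!\left(k-\tfrac1k\right).
\]
So ``multiple shifted executions'' are used in \emph{parallel}, not in series: the gain over the old bound comes from averaging $k$ inequalities whose adjusted-clock terms telescope around the diagonal cycle, not from accumulating displacement through a chain. Your proposal is missing this cancellation idea, and without it the argument stalls at $\tfrac14 um(k-1)$.
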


\begin{proof}
Let $\cal A$ be any algorithm that achieves $\epsilon$-synchronized clocks in
a $k$-ary $m$-toroid with uniform uncertainty $u$,
where $k = 2r+1$ for some integer $r \ge 1$. 
Let $\alpha$ be the admissible execution of $\cal A$ in which 
$HC_{\vec{p}}(t) = t$ for each process $\vec{p}$,
every message from $\vec{p}$ to $\vec{q}$, 
where $\vec{q}$ is $\vec{p}$'s neighbor in the $h$-th 
dimension such that
$q_h = p_h + 1$,
has the same fixed delay 
$\delta_{\vec{p},\vec{q}}$, which is $0$ if $0 \le p_h < r$ and is $u$ if
$r \le p_h < k$,
and every message from $\vec{q}$ to $\vec{p}$ has the same fixed delay
$\delta_{\vec{q},\vec{p}} = u - \delta_{\vec{p},\vec{q}}$.

For $0 \le i < k$, define $\alpha^i =$ {\em shift}$(\alpha,${\bf x}$^i)$,
where the $\vec{p}$-th element of the shift matrix
{\bf x}$^i$, denoted 
$x^i_{\vec{p}}$, is defined as $\sum _{j=0}^{m-1}{{\bf W}^i_{p_j}},$ 
where {\bf W} is defined as follows:

\begin{center}
\begin{tabular}{|c|c||c|c|}
\hline
\multicolumn{4}{|c|}{range of $i$} \\
\hline
\hline
\multicolumn{2}{|c||}{$0 \le i < r$} &
\multicolumn{2}{c|} {$r \le i < k$} \\
\hline
range of $p_j$ & ${{\bf W}^i_{p_j}}$ & range of ${p_j}$ & ${{\bf W}^i_{p_j}}$ \\
\hline
\hline
$0 \le {p_j} \le i$       & 0           & $0 \le {p_j} \le i-r$ & ${p_j}u$     \\
\hline
$i < {p_j} \le r$         & $({p_j}-i)u$    & $i-r < {p_j} \le r$   & $(i-r)u$ \\
\hline
$r < {p_j} \le r + i + 1$ & $(r-i)u$    & $r < {p_j} \le i$     & $(i-{p_j})u$ \\
\hline
$r+i +1 < {p_j} \le 2r$   & $(2r-{p_j}+1)u$ & $i < {p_j} \le 2r$    & $0$      \\
\hline
\end{tabular}
\end{center}

The idea behind the shift amounts in ${\bf W}$ is
to cause two processes that are farthest apart in the
graph to be shifted as far apart in real time as possible---thus achieving 
a large skew between their
adjusted clocks---while maintaining valid message delays between all
neighbors.  By considering multiple shifted executions, we can
cancel out terms involving adjusted clocks, leaving behind only
terms that involve the system parameters $\epsilon$ and $u$, and the
graph parameters $k$ and $m$.

As an example, consider the case when $k = 5$ and $m = 1$, that is, a
5-element ring (cf.\ Figure \ref{fig:Graph of a 5-ary 1-toroid}).  We will
denote the process with id $\langle i \rangle$ by $p_i$.

\begin{figure}[H]
	\begin{center}
		\centering
		\begin{minipage}{.49\textwidth}
			\centering
			\begin{tikzpicture}[scale=.8, every node/.style={scale=.8}]
			\def \radius {3cm}
			\def \Bradius {\radius+3}
			\def \BLradius {\Bradius+7}
			\def \Sradius {\radius-3}
			\def \SLradius {\Sradius-7}
			\def \margin {8}
			
			\foreach \n in {0,1,2,3,4}
			{
				\node[draw, circle] at ({72*\n+90}:\radius) {$p_\n$};
				\draw[<->, >=latex] ({72*\n+\margin+90}:\radius) arc ({72*\n+\margin+90}:{90+72 * (\n+1)-\margin}:\radius);
			}
			\end{tikzpicture}
			\caption{Graph of a 5-ary 1-toroid}
			\label{fig:Graph of a 5-ary 1-toroid}
		\end{minipage}

		\begin{minipage}{.49\textwidth}
			\centering
			\begin{tikzpicture}[scale=1.1, every node/.style={scale=.775}]
			\foreach \y in {0,.7,1.4,2.1,2.8,3.5}{
				\draw[very thick,->] (0,\y) -- (3,\y);
			}
			\foreach \y in {0,1,2,3,4}
			\draw (-.5,3.5-\y*.7) -- (-.5,3.5-\y*.7) node[anchor=east] {$p_\y$}; 
			\draw (-.5,0) -- (-.5,0) node[anchor=east] {$p_0$};
			
			\draw[very thick,densely dotted,gray] (.02,-.25) -- (.02,3.75);
			\draw (.02,4) -- (.02,4) node {$0$};
			\draw[very thick,densely dotted,gray] (1.77,-.25) -- (1.77,3.75);
			\draw (1.77,4) -- (1.77,4) node {$u$};
			
			\foreach \y in {.7,1.4,2.1}{
				\draw[thick,->,>=latex] (.02,\y) -- (1.8,\y-.7);
				\draw[thick,->,>=latex] (.02,\y-.7) -- (.02,\y);
			}
			\foreach \y in {2.8,3.5}{
				\draw[thick,->,>=latex] (.02,\y-.7) -- (1.77,\y);
				\draw[thick,->,>=latex] (.02,\y) -- (.02,\y-.7);
			}
			\foreach \y in {0,.7,1.4,2.1,2.8,3.5}{
				\draw[very thick] (0,\y) -- (1,\y);
			}
			\draw[thick,->,white] (0,.4-.78)--(1,.4-.78);
			\draw[thick,->,white] (0,4.5-.078)--(.1,4.5-.078);
			\end{tikzpicture}
			\caption{Delay Pattern in $\alpha$ for 5-ary 1-toroid}
			\label{fig:Defined Delay Pattern for 5-ary 1-toroid}
		\end{minipage}
	\end{center}
\end{figure}

Figure \ref{fig:Defined Delay Pattern for 5-ary 1-toroid} depicts the
pattern of message delays in $\alpha$ for the 5-element ring, with $p_0$
occurring twice for convenience in representing the wrap-around.  The
interpretation is that every message, if any, sent from $p_0$ to $p_1$
has delay 0, every message sent from $p_1$ to $p_0$ has delay $u$, etc.
We make no assumption about when or if such messages are sent, as that
depends on the algorithm.

\begin{figure}[H]
	\begin{center}
		\centering
		\begin{minipage}{.48\textwidth}
			\centering
			\begin{tikzpicture}[scale=0.775, every node/.style={scale=0.775}]
			
			\foreach \y in {0,3.5,2.8}{
				
				\draw[very thick,->] (0,\y) -- (4.25,\y);
			}
			\foreach \y in {.7,1.4,2.1}{
				\draw[very thick,->] (1.75,\y) -- (6,\y);
			}
			\foreach \y in {0,1,2,3,4}
			\draw (-.5,3.5-\y*.7) -- (-.5,3.5-\y*.7) node[anchor=east] {$p_\y$}; 
			\draw (-.5,0) -- (-.5,0) node[anchor=east] {$p_0$};
			
			\draw[very thick,densely dotted,gray] (.02,-.25) -- (.02,3.75);
			\draw (.02,4) node {$0$};
			\draw[very thick,densely dotted,gray] (1.77,-.25) -- (1.77,3.75);
			\draw (1.77,4) node {$u$};
			\draw[very thick,densely dotted,gray] (3.52,-.25) -- (3.52,3.75);
			\draw (3.52,4) node {$2u$};

			\foreach \y in {1.4,2.1}{
				\draw[thick,->,>=latex] (1.77,\y) -- (1.8+1.75,\y-.7);
				\draw[thick,->,>=latex] (1.77,\y-.7) -- (1.77,\y);
			}
			\foreach \y in {3.5}{
				\draw[thick,->,>=latex] (.02,\y-.7) -- (1.77,\y);
				\draw[thick,->,>=latex] (.02,\y) -- (.02,\y-.7);
			}
			\foreach \y in {2.8}{
				\draw[thick,->,>=latex] (.02+\y*-1*1.75/.7+3.5*1.75/.7-1.75,\y) -- (1.8+\y*-1*1.75/.7+3.5*1.75/.7-1.75,\y-.7);
				\draw[thick,->,>=latex] (1.77,\y-.7) -- (1.77+\y*-1*1.75/.7+3.5*1.75/.7-1.75,\y);
			}
			
			\draw[thick,->,>=latex] (1.77,.7) -- (1.77,0);
			\draw[thick,->,>=latex] (.02,0) -- (1.77,.7);
			\foreach \y in {0,3.5,2.8}{
				\draw[very thick] (-.005,\y) -- (1,\y);	
			}
			\foreach \y in {.7,1.4,2.1}{

				\draw[very thick] (1.75-.005,\y) -- (1.75-.005+1,\y);
			}
			\end{tikzpicture}
			\caption{Shifted Delay Pattern for $\alpha^1$}
			\label{fig:Shifted Delay Pattern for alpha1}
		\end{minipage}

		\begin{minipage}{.5\textwidth}
			\centering
			\begin{tikzpicture}[scale=0.775, every node/.style={scale=0.775}]
			
			\foreach \y in {2.1,2.8,3.5}{

				\draw[very thick,->] (\y*-1*1.75/.7+3.5*1.75/.7,\y) -- (4.75+\y*-1*1.75/.7+3.5*1.75/.7-.5,\y);	
			}
			\foreach \y in {.7,1.4}{

				\draw[very thick,->] (-\y*-1*1.75/.7-3.5*1.75/.7+1.75*4,\y) -- (4.75-\y*-1*1.75/.7-3.5*1.75/.7+1.75*4-.5,\y);
			}
			\draw[very thick,->] (0,0) -- (4.25,0);
			\foreach \y in {0,1,2,3,4}
			\draw (-.5,3.5-\y*.7) -- (-.5,3.5-\y*.7) node[anchor=east] {$p_\y$}; 
			\draw (-.5,0) -- (-.5,0) node[anchor=east] {$p_0$};
			
			\draw[very thick,densely dotted,gray] (.02,-.25) -- (.02,3.75);
			\draw (.02,4) node {$0$};
			\draw[very thick,densely dotted,gray] (1.77,-.25) -- (1.77,3.75);
			\draw (1.77,4) node {$u$};
			\draw[very thick,densely dotted,gray] (3.52,-.25) -- (3.52,3.75);
			\draw (3.52,4) node {$2u$};
			
			\foreach \y in {3.5,2.8}{
				\draw[thick,->,>=latex] (.02+\y*-1*1.75/.7+3.5*1.75/.7,\y) -- (1.77+\y*-1*1.75/.7+3.5*1.75/.7,\y-.7);
				\draw[thick,->,>=latex] (1.77+\y*-1*1.75/.7+3.5*1.75/.7,\y-.7) -- (1.77+\y*-1*1.75/.7+3.5*1.75/.7,\y-.01);
			}
			\foreach \y in {2.1,1.4}{
				\draw[thick,->,>=latex] (.02-\y*-1*1.75/.7-3.5*1.75/.7+4*1.75,\y) -- (.02-\y*-1*1.75/.7-3.5*1.75/.7+4*1.75,\y-.7);
				\draw[thick,->,>=latex] (.02-\y*-1*1.75/.7-3.5*1.75/.7+1.75*3,\y-.7) -- (.02-\y*-1*1.75/.7-3.5*1.75/.7+1.75*4,\y+.01);
			}
			\draw[thick,->,>=latex] (.02,.7) -- (1.8,0);
			\draw[thick,->,>=latex] (.02,0) -- (.02,.7);
			\foreach \y in {2.1,2.8,3.5}{
				\draw[very thick] (\y*-1*1.75/.7+3.5*1.75/.7-.005,\y) -- (1+\y*-1*1.75/.7+3.5*1.75/.7-.5,\y);	
			}
			\foreach \y in {.7,1.4}{
				\draw[very thick] (-\y*-1*1.75/.7-3.5*1.75/.7+1.75*4-.005,\y) -- (1-\y*-1*1.75/.7-3.5*1.75/.7+1.75*4-.5,\y);	
			}
			\draw[very thick] (0,0) -- (1,0);
			\end{tikzpicture}
			\caption{Shifted Delay Pattern for $\alpha^4$}
			\label{fig:Shifted Delay Pattern for alpha4}
		\end{minipage}
	\end{center}
\end{figure}

Now we consider two of the five shifts for this special case. The shift
matrix defined by ${\bf W}$ for $\alpha^1$ is $[0,0,u,u,u]$
and that for $\alpha^4$ is $[0,u,2u,u,0]$.
Figures \ref{fig:Shifted Delay Pattern for alpha1} and
\ref{fig:Shifted Delay Pattern for alpha4} depict the pattern of message
delays in $\alpha^1$ and $\alpha^4$ respectively, reflecting the changes
indicated by Lemma \ref{lem:shifting}(b).
Visual inspection shows that the delays are still in the valid range
and thus the shifted executions are admissible.

Admissibility and Lemma \ref{lem:shifting}(a) imply that 
$AC'_1 - AC'_4 = AC_1 - AC_4 + u \le \epsilon$ 
and
$AC'_4 - AC'_2 = AC_4 - AC_2 + 2u \le \epsilon$.
Similarly, one can check that $\alpha^0$, $\alpha^2$, and $\alpha^3$
are admissible and then get similar inequalities.
Summing the five inequalities results in $6u \le 5\epsilon$, or
$\epsilon \ge 6u/5$ which agrees with Theorem \ref{theorem:Lower Bound}.

We now show that all shifted executions are admissible.

\begin{lemma}
\label{lem:admiss}
For all $i$, $0 \le i < k$, $\alpha^i$ is admissible.
\end{lemma}

\begin{proof}
Fix $i$ with $0 \le i < k$.
We must show that all message delays are in $[0,u]$.
Let $\vec{p}$ and $\vec{q}$ be two neighbors that differ in the $h$-th
dimension such that $q_h = p_h + 1$ and $q_j = p_j$ for all $j \ne h$.
Denote the (fixed) delay of messages from $\vec{p}$ to $\vec{q}$ 
in $\alpha^i$ by $\delta^i_{\vec{p}, \vec{q}}$.
By Lemma \ref{lem:shifting}(b), $\delta^i_{\vec{p}, \vec{q}} = \delta_{\vec{p}, \vec{q}} 
+ \Delta^i_{\vec{p},\vec{q}}$, where $\Delta^i_{\vec{p},\vec{q}}$
denotes $- x^i_{\vec{p}} + x^i_{\vec{q}}$.  
Observe that $\Delta^i_{\vec{q},\vec{p}} = - \Delta^i_{\vec{p},\vec{q}}$.

\begin{align*}
\Delta^i_{\vec{p},\vec{q}} &= - \sum_{j=0}^{m-1}{\bf W}^i_{p_j} +
   \sum_{j=0}^{m-1}{\bf W}^i_{q_j} 
       &\text{ by definition of shift vector {\bf x}$^i$ for $\alpha^i$} \\
  &= - {\bf W}^i_{p_h} + {\bf W}^i_{q_h} 
      &  \text{ since $\vec{p}$ and $\vec{q}$ only differ in the $h$-th 
                dimension} \\
  &= - {\bf W}^i_{p_h} + {\bf W}^i_{p_h + 1} 
      & \text{by definition of $\vec{q}$}
\end{align*}
Referring to the table defining {\bf W}, we get the following values
for $\Delta^i_{\vec{p},\vec{q}}$:

\begin{center}
\begin{tabular}{|c|c||c|c|}
\hline
\multicolumn{4}{|c|}{range of $i$} \\
\hline
\hline
\multicolumn{2}{|c||}{$0 \le i < r$} &
\multicolumn{2}{c|} {$r \le i < k$} \\
\hline
range of $p_h$ & 
  ${\Delta}^i_{\vec{p},\vec{q}}$ &
  range of ${p_h}$ &
  ${{\Delta}^i_{\vec{p},\vec{q}}}$ \\
\hline
\hline
$0 \le p_h < i$       & 0           & $0 \le p_h < i-r$ & $u$     \\
\hline
$i \le p_h < r$         & $u$    & $i-r \le p_h < r$   & 0 \\
\hline
$r \le p_h < r+i+1$ & 0    & $r \le p_h < i$     & $-u$ \\
\hline
$r+i+2 \le p_h \le 2r$   & $-u$ & $i \le p_h \le 2r$    & 0      \\
\hline
\end{tabular}
\end{center}

To gain an intuition for why $\alpha^i$ is admissible, consider
how the delays chosen for $\alpha$ relate to $\Delta^i_{\vec{p},\vec{q}}$.
Recall that $\vec{p}$ and $\vec{q}$ are neighbors in dimension $h$.
If $\vec{p}$ occurs before index $r$ in dimension $h$, 
then $\delta_{\vec{p},\vec{q}}$, the delay from $\vec{p}$ to $\vec{q}$ in 
$\alpha$, is chosen
so that $\Delta^i_{\vec{p},\vec{q}}$ can be maximized;
otherwise it is chosen so that $\Delta^i_{\vec{p},\vec{q}}$ can be minimized.
To keep the shifted message delays in the valid range, 
$\Delta^i_{\vec{p},\vec{q}}$ must be between $u$ and $-u$.
In particular, the delay in $\alpha$ when $\vec{p}$ occurs before
index $r$ is chosen so that $\Delta^i_{\vec{p},\vec{q}}$ can be
$u$; otherwise it is chosen so that $\Delta^i_{\vec{p},\vec{q}}$ can be
$-u$.  Below we formalize these ideas.

Since $\delta_{\vec{p},\vec{q}}$ is in $[0,u]$, so is
$\delta^i_{\vec{p},\vec{q}}$ for all table entries where
$\Delta^i_{\vec{p},\vec{q}} = 0$.  For all table entries where
$\Delta^i_{\vec{p},\vec{q}} = u$, the definition of $\alpha$ states
that $\delta_{\vec{p},\vec{q}} = 0$, and thus
$\delta^i_{\vec{p},\vec{q}} = 0 + u = u$.  For all table entries where
$\Delta^i_{\vec{p},\vec{q}} = -u$, the definition of $\alpha$ states
that $\delta_{\vec{p},\vec{q}} = u$, and thus
$\delta^i_{\vec{p},\vec{q}} = u + (-u) = 0$.  In all cases
$\delta^i_{\vec{p},\vec{q}}$ is in $[0,u]$.

Since
$\delta_{\vec{q},\vec{p}}$ is defined in $\alpha$ to be $u - \delta_{\vec{p},
\vec{q}}$ and $\Delta^i_{\vec{q},\vec{p}} = - \Delta^i_{\vec{p},\vec{q}}$,
it follows that $\delta^i_{\vec{q},\vec{p}} = u - \delta^i_{\vec{p},\vec{q}}$.
Since we just showed that $\delta^i_{\vec{p},\vec{q}}$ is in $[0,u]$, the
same is true of $\delta^i_{\vec{q},\vec{p}}$.
Thus $\alpha^i$ is admissible.
\end{proof}

Fix any $i$ with $0 \le i < r$.  We focus on two processes that are maximally
far away from each other.  Since $\alpha^i$ is admissible 
{by Lemma \ref{lem:admiss}},
$\cal A$ must ensure that
$AC^i_{\left\langle i,...,i\right\rangle } - 
AC^i_{\left\langle i+r+1,...,i+r+1\right\rangle} \le \epsilon$, where
$AC^i_{\vec{p}}$ denotes the adjusted clock of process $\vec{p}$ after 
termination in $\alpha^i$.
By definition of $\alpha^i$ and Lemma \ref{lem:shifting}(a),
$AC^i_{\left\langle i,...,i\right\rangle } = 
AC_{\left\langle i,...,i\right\rangle }$
and
$AC^i_{\left\langle i+r+1,...,i+r+1\right\rangle} =
AC_{\left\langle i+r+1,...,i+r+1\right\rangle} - m(r-i)u$.
Thus by substituting we get:
\begin{align}
AC_{\left\langle i,...,i\right\rangle } - AC_{\left\langle i+r+1,...,i+r+1\right\rangle } \le - m(r-i)u + \epsilon \mbox{ for } 0 \le i < r
\end{align}
Similarly, we can show:
\begin{align}
AC_{\left\langle i,...,i\right\rangle } - AC_{\left\langle i-r,...,i-r\right\rangle } \le - m(i-r)u + \epsilon \mbox{ for } r \le i < k
\end{align}

Adding together the $r$ inequalities from (1) and the $k-r$
inequalities from (2) gives
\begin{align}
\begin{split}
  \sum_{i=0}^{r-1} AC_{\left\langle i,...,i\right\rangle } &- \sum_{i=0}^{r-1} AC_{\left\langle i-r,...,i-r\right\rangle } +
  \sum_{i=r}^{k-1} AC_{\left\langle i,...,i\right\rangle } - \sum_{i=r}^{k-1} AC_{\left\langle i-r,...,i-r\right\rangle } \\
  &\le
  -um\left[\sum_{i=0}^{r-1}(r-i) + \sum_{i=r}^{k-1}(i-r)\right] + k\epsilon
\end{split}
\end{align}
The left-hand side of (3)
resolves to $0$ and the expression in square brackets equals $(k^2-1)/4$, and thus
$\epsilon \ge \frac{1}{4}um\left(k - \frac{1}{k}\right).$
\end{proof}

\section{Conclusion}

We have closed the gap between the best previously-known closed-form
upper and lower bounds on the optimal clock synchronization for
$k$-ary $m$-toroids when $k$ is odd and the uncertainty on each link
is the same.  By applying a more involved set of shifts than those in
the prior work \cite{BiazW2001} and exploiting the specific network
topology, we achieved a lower bound that equals the upper bound due to
the algorithm in \cite{BiazW2001}.

\end{document}